\DeclareMathOperator*{\argmin}{arg\,min}
\theoremstyle{plain}
\newtheorem{theorem}{Theorem}
\theoremstyle{plain}
\newtheorem{proposition}{Proposition}
\theoremstyle{plain}
\theoremstyle{plain}
\newtheorem{corollary}{Corollary}
\theoremstyle{definition}
\newtheorem{assumption}{Assumption}
\theoremstyle{remark}
\newtheorem{remark}{Remark}
\newtheoremstyle{definition}{}{}{}{}{\bfseries}{.}{.5em}{\thmname{#1}\thmnumber{ #2}\thmnote{ (#3)}}
\theoremstyle{definition}
\newtheorem{definition}{Definition}
\Crefname{equation}{Equation}{Eqs.}
\title{\LARGE\bf Information Control Barrier Functions: \\ Preventing Localization Failures in Mobile Systems Through Control}
\author{Samuel G. Gessow, David Thorne, and Brett T. Lopez%
\thanks{
All authors are with VECTR Laboratory, University of California Los Angeles, Los Angeles, CA, USA , {\tt\footnotesize \{sgessow, davidthorne, btlopez\}@ucla.edu}}
}
\begin{document}
\maketitle
\thispagestyle{empty}
\pagestyle{empty}

%============================================================
% Abstract
%============================================================

\begin{abstract}
This paper develops a new framework for preventing localization failures in mobile systems that must estimate their state using measurements. 
Safety is guaranteed by imposing the nonlinear least squares optimization solved in modern localization algorithms remains well-conditioned.
Specifically, the eigenvalues of the Hessian matrix are made to be always positive via two methods that leverage control barrier functions to achieve safe set invariance. 
The proposed method is not constrained to any specific measurement or system type, offering a very general solution to the safe mobility with localization problem.
The efficacy of the approach is demonstrated on a system being provided range-only and heading-only measurements for localization.
\end{abstract}

\section{Introduction}
\label{sec:introduction}
Localization involves acquiring and processing measurements in real-time to determine the state of a mobile platform in a given frame of reference.
Localization often dictates the overall performance of mobile systems because control, planning, and autonomy heavily rely on accurate state estimates. 
As a result, the field of localization has seen considerable theoretical and practical advances over the past decade, especially as sensors and processors have become lighter, smaller, and more powerful. 
However, despite major progress, localization failures---the inability to determine a system's state---still occur and are usually catastrophic when there are decision making algorithms downstream of the localization stack.
Since localization failures can occur rapidly, there is little to no time for corrective action to restore localization so either i) the localization algorithm must be guaranteed to never fail no matter the environment or ii) the system {must \emph{predict} when a failure is about to occur} and take preemptive corrective actions. 
Fundamentally, odometry algorithms rely on geometric or visual features distributed throughout the environment \cite{barfoot2024state}, meaning the performance of these algorithms is inherently state-dependent, and this state-dependency is ignored as the system is moving through the environment leading to localization failures. 
This work addresses the need for \emph{safety in online localization} for mobile systems by developing a new framework that leverages the simplicity and effectiveness of control barrier functions (CBFs) in conjunction with matrix calculus and nonlinear least squares to ensure a system never enters a state where localization may fail.

Prior works in planning have investigated designing trajectories through ``measurement rich" parts of the environment that improve the observability of certain states/model parameters.
Belief-space planning \cite{prentice2009belief,platt2010belief} solves (approximately) a partially observable Markov decision making process that must balance information gain about system states with reaching a final goal location.
Perception-aware planning \cite{costante2016perception,falanga2018pampc} frameworks for visual-inertial odometry design trajectories that require sufficient visual features stay in the field-of-view such that localization is well-constrained.
Rather than attempting to reduce estimation covariances, observability-aware planning \cite{hausman2017observability} designs motions that maximize the Observability Gramian for specific states or model parameters.
One limitation of the aforementioned works is the reliance on \emph{a priori} information in the form of a global or local map of the environment which is unrealistic in many applications. 
Another is that localization quality is treated as a soft constraint when making control or planning decisions so safety cannot be formally guaranteed.
More recently, CBFs have been proposed in specific instances of this problem such as visual servoing \cite{salehi2021constrained} and in conjunction with the observability matrix \cite{coleman2024using}.
However, these methods are tailored to specific applications and do not provide a general approach for ensuring accurate localization across a broad class of mobile platforms.

This letter introduces a novel framework for achieving safe mobility with online localization for nonlinear systems through safety-critical control.
Specifically, we formulate a set invariance condition via CBFs that imposes a minimum threshold on the eigenvalue of the Hessian matrix associated with the {optimization} solved in modern localization algorithms.
In other words, our formulation prevents online nonlinear least squares from becoming ill-conditioned by taking preventative actions via control. 
Importantly, the method does not rely on proxies for localization performance but directly guarantees sufficient conditions for safe operation.
It achieves this without requiring \textit{a priori} information about the environment, so it can be used in unknown environments and be immediately combined with an existing trajectory planner.
Our framework leverages the state-dependency of the minimum eigenvalue of the Hessian to achieve safety so it is necessary to guarantee that the minimum eigenvalue is sufficiently smooth.
To ensure this, we present two methods: one that ensures the eigenvalues are always differentiable even in the non-simple, i.e., repeated, case and another that forces the eigenvalues to always remain simple. 
Our approach can be applied to systems of high relative degree in addition to situations where avoiding detection is the primary objective rather than localization performance.
The proposed approach is validated on a double integrator system using range-only and heading-only localization with beacons.

\textit{Notation:} 
The set of strictly-positive scalars is denoted $\mathbb{R}_{+}$.
The set of real symmetric matrices of dimension $n$ is denoted $\mathcal{S}^n$, with the set of positive definite matrices $\mathcal{S}^n_+$.
The eigenvalues of matrix $A$ are $\lambda(A)$ and the smallest eigenvalue is $\lambda_{\mathrm{min}} (A) = \min \, \lambda(A)$. 
A matrix that is a function of $x$ and $t$ is denoted as $A(x,t)$ and is smooth if all of its elements $a_{ij}(x,t)$ are smooth functions of $x$ and $t$. 
The Lie derivative of vector field $x \mapsto h(x)$ along the flow of another vector field $x \mapsto f(x)$ is denoted $L_f h(x)= \nabla h(x)^\top f(x)$. 
The norm of vector $x\in \mathbb{R}^n$ with $\Sigma \in \mathcal{S}_+^{n}$ is $\|x\|^2_{\Sigma} = x^\top \Sigma^{-1} x$.
 
\section{Problem Formulation and Preliminaries}
\label{sec:problem}

\subsection{Overview}
Consider the nonlinear system
\begin{equation}
    \begin{aligned}
    \dot x & = f(x) + g(x) u \\
    y & = \mathfrak{m}(x),
    \end{aligned}
    \label{eq:eqn_sys}
\end{equation}
with state $x \in \mathbb{R}^n$, control input $u \in \mathbb{R}^m$, measured output $y \in \mathbb{R}^p$, dynamics $f: \mathbb{R}^n \rightarrow \mathbb{R}^{n}$, control input matrix $g: \mathbb{R}^n \rightarrow \mathbb{R}^{n\times m}$, and nonlinear measurement model $\mathfrak{m} : \mathbb{R}^n \rightarrow \mathbb{R}^p$. 
There are several methods for generating an estimate of the state $x$ using output $y$, e.g., Kalman filtering and state observers.
However, recent works in the robotics literature have shown the superiority of optimization-based approaches that involve solving a nonlinear least squares problem in an online fashion \cite{leutenegger2015keyframe,barfoot2024state,chen2023dliom}. 
In particular, if measurements $m \in \mathbb{R}^p$ are available, then one can use numerical optimization to find the optimal state estimate $\hat{x}^*$ by solving
\begin{equation}
\label{eq:nl}
        \hat{x}^* = \argmin_{x \, \in \, \mathbb{R}^n} \,  \|m - \mathfrak{m}(x)\|_{\Sigma(x)}^2,
\end{equation}
where $\Sigma : \mathbb{R}^n \rightarrow \mathcal{S}^{n}$ is a state-dependent positive definite matrix that represents the level of confidence in $m$.
We allow $\Sigma$ to be state-dependent as a way to capture measurement degradation/dropout; this will be discussed more in \cref{sec:example}.
If the measurement model does not contain enough information to estimate the full state directly by solving \cref{eq:nl}, then the nonlinear least squares optimization can be performed over a window of measurements with a motion model. 
In either case, the proposed approach is applicable.
The manifestation of \cref{eq:nl} can be traced to nonlinear Maximum A Posteriori (MAP) estimation \cite{barfoot2024state} where maximizing the log likelihood function is more convenient when the likelihood function belongs to the exponential family, i.e., likelihood function is approximately normal.
The goal of this work is to ensure the optimization \cref{eq:nl} is well-constrained (to be defined later) so a unique state estimate is always available; a guarantee to be achieved through control.
In the sequel, we will denote the nonlinear least squares cost as $J(x,{m})$.

\subsection{Control Barrier Functions}
This work will employ controlled invariant sets to ensure the optimization \cref{eq:nl} always produces a unique estimate $\hat{x}^*$.
Specifically, set invariance will be achieved with {CBFs}.
The following definitions can be found in many previous works \cite{ames2016control,ames2019control} and are stated here for completeness. 

\begin{definition}
The set ${S}$ is {\emph{forward invariant}} if for every $x_{0} \in {S}, x(t) \in {S}$ for all $t$.
\end{definition}

\begin{definition}
The nominal system is \emph{safe} with respect to set ${S}$ if the set ${S}$ is forward invariant.
\end{definition}

\begin{definition}
A continuous function $\alpha : \mathbb{R} \rightarrow \mathbb{R}$ is an \emph{extended class $\mathcal{K}_{\infty}$ function} if it is strictly increasing, $\alpha(0)=0$, and is defined on the entire real line.
\end{definition}

\begin{definition}[cf.~\cite{ames2016control}]
Let $S \subset E \subset \mathbb{R}^n$ be the 0-superlevel set of a continuously differentiable function $h : E \rightarrow \mathbb{R}$ defined on the open and connected set $E \subset \mathbb{R}^n$. The function $h$ is a {\emph{control barrier function}} for \cref{eq:eqn_sys} if there exists an extended class $\mathcal{K}_\infty$ function $\alpha$ for all $x \in E$ such that
\begin{equation*}
\label{def:CBF}
    \sup_{u\in U} \left[ L_f h(x) + L_g h(x) u \right] \geq - \alpha(h(x)).
\end{equation*}
\end{definition}

\subsection{Other Useful Definitions}
This work will also make use of the following definitions from optimization theory, linear algebra, and analysis. 

\begin{definition}
\label{def:degeneracy}
    Let $F: \mathbb{R}^n \rightarrow \mathbb{R}$ be a twice differentiable function with critical point $z^*$ where the gradient of $F$ at $z^*$ vanishes, i.e., $\nabla F(z^*) = 0$. 
    The Hessian $H:\mathbb{R}^{n} \rightarrow \mathcal{S}^{n}$ of $F$ given by $H(z) = \nabla^2 F(z)$ is said to be \textit{degenerate} at the critical point $z^*$ if at least one eigenvalue of $H$ is zero.
\end{definition}

\begin{remark}
    An alternative to \cref{def:degeneracy} is that the Hessian matrix is degenerate if its determinant is zero. 
    This can be trivially shown to be equivalent to the eigenvalue condition.
\end{remark}

\begin{definition}
    An eigenvalue of a matrix is \textit{simple} if it has an algebraic multiplicity of one.
    In other words, a simple eigenvalue only appears once in the roots of the characteristic polynomial of the matrix under consideration.
\end{definition}

\begin{definition}[cf.~\cite{tao2006power}]
    Let $F: \mathbb{R} \rightarrow \mathbb{R}$. The function $F$ is \textit{analytic} on $\mathbb{R}$ if for all $a\in \mathbb{R}$ there exists an open interval $(a-r,a+r)$,  $r \in \mathbb{R}_{+}$ such that there exists a power series $\sum_{n=0}^{\infty}c_n(x-a)^n$ which has radius of convergence greater than or equal to $r$ and which converges to $F$ on $(a-r,a+r)$.
\end{definition}

\begin{remark}
    A function being analytic implies that it is infinitely differentiable, but the converse does not hold.
    This makes being analytic a stronger condition than being infinitely differentiable.
\end{remark}
\section{Main Results}
\label{sec:main_result}

\subsection{Overview}
In this section, we present a new approach that prevents the nonlinear least squares optimization defined in \cref{eq:nl} from becoming ill-conditioned through the use of {CBFs}.
Central to our approach is the following result that establishes a connection between the uniqueness of a critical point for a twice differentiable function and the eigenvalues of its corresponding Hessian. 

\begin{proposition}
\label{prop:unique}
    Let $z^* \in \mathbb{R}^n$ be a critical point of a twice differentiable function $F: \mathbb{R}^n \rightarrow \mathbb{R}$.
    If the eigenvalues of $H(z^*) = \nabla^2 F(z^*)$ are all positive locally then $z^*$ is a unique local minimum of $F$.
\end{proposition}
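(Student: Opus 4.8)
The plan is to read the hypothesis as the statement that the Hessian is positive definite on an entire neighborhood of $z^*$, not merely at $z^*$ itself. Since $H$ maps into $\mathcal{S}^n$, the matrix $H(z)$ is symmetric, so ``all eigenvalues positive'' is equivalent to $H(z) \succ 0$; ``positive locally'' therefore means there is an open ball $N = B(z^*,r)$ on which $H(z) \succ 0$ for every $z \in N$. I would fix such a \emph{convex} neighborhood $N$ at the outset, because convexity is exactly what lets pointwise positivity of the Hessian be upgraded into a global statement about $F$ on $N$.

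First I would establish that $z^*$ is a strict local minimum. Restricting $F$ to the segment from $z^*$ to an arbitrary $z \in N$ via $\phi(t) = F\!\left(z^* + t(z - z^*)\right)$ and applying the one-variable Taylor theorem with Lagrange remainder yields
\[
F(z) = F(z^*) + \nabla F(z^*)^\top (z - z^*) + \tfrac{1}{2}(z-z^*)^\top H(\xi)(z - z^*)
\]
for some $\xi$ on the open segment between $z^*$ and $z$. Because $z^*$ is a critical point the first-order term vanishes, and because $N$ is convex the point $\xi$ lies in $N$, so $H(\xi) \succ 0$ and the quadratic remainder is strictly positive whenever $z \neq z^*$. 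Hence $F(z) > F(z^*)$ for all $z \in N \setminus \{z^*\}$, which shows $z^*$ is a strict local minimum.

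Second, I would argue uniqueness. The inequality just derived already shows $z^*$ is the unique global minimizer of $F$ over $N$, but to exclude \emph{any other} local minimum in $N$ I would show $z^*$ is the only critical point there. Integrating $H$ along segments gives, for distinct $a,b \in N$, the identity $(\nabla F(a) - \nabla F(b))^\top(a-b) = \int_0^1 (a-b)^\top H\!\left(b + t(a-b)\right)(a-b)\,dt > 0$, so $\nabla F$ is strictly monotone on the convex set $N$ and therefore injective; in particular $\nabla F$ vanishes at most once on $N$. Since every local minimum is a critical point, $z^*$ is the unique local minimum. Equivalently, positive definiteness of $H$ throughout the convex $N$ makes $F$ strictly convex there, and a strictly convex function has exactly one critical point, which is its unique (global, hence local) minimizer.

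The Taylor expansion and the second-order sufficiency are routine. The step demanding the most care---and the reason the hypothesis must be read as positivity on a neighborhood rather than only at $z^*$---is the uniqueness claim, since positive definiteness at the single point $z^*$ guarantees only a strict (isolated) local minimum and says nothing about ruling out other local minima nearby. Securing uniqueness hinges on the convexity of $N$ together with the pointwise bound $H(z) \succ 0$ on all of $N$, which is precisely what converts local curvature information into the global-on-$N$ strict convexity and strict gradient monotonicity used above.
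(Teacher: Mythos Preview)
Your argument is correct and in fact more careful than the paper's. The paper proceeds by a second-order Taylor expansion of $F$ at $z^*$ with the Hessian evaluated only at the single point $z^*$, writing $F(z^*+\delta_z)=F(z^*)+\nabla F(z^*)\delta_z+\delta_z^\top H(z^*)\delta_z+\mathcal{O}(\|\delta_z\|^3)$, and concludes directly from $H(z^*)\succ 0$ that $F(z^*+\delta_z)>F(z^*)$, calling this a ``unique local minimum.'' You instead take the Lagrange-remainder form, evaluating $H$ at an intermediate point $\xi$ in a convex neighborhood $N$ on which $H\succ 0$, and then add a separate step---strict monotonicity of $\nabla F$, equivalently strict convexity of $F$ on $N$---to rule out any other critical point in $N$.

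The difference buys you two things. First, your reading of ``positive locally'' as positivity on a full neighborhood is what actually justifies the uniqueness claim in the stronger sense (no other local minimum nearby); the paper's argument, using only $H(z^*)\succ 0$, really establishes that $z^*$ is a \emph{strict} local minimum and tacitly identifies that with ``unique.'' Second, your Lagrange-remainder route avoids the mild sloppiness of writing an $\mathcal{O}(\|\delta_z\|^3)$ remainder for a function assumed merely twice differentiable. The paper's approach is shorter and suffices for its purposes, while yours is the version one would write if pressed on the word ``unique.''
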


\begin{proof}
    Let $z^*$ be a critical point of $F$, then $\nabla F(z^*) = 0$ by definition.
    Consider a local perturbation around $z^*$, denoted as $\delta_z$, then $F(z^* + \delta_z) = F(z^*) + \nabla F(z^*)\delta_z + \delta_z^\top H(z^*) \delta_z + {\scriptstyle \mathcal{O}}(\|\delta_z\|^3)$.
    Since $\nabla F(z^*)=0$ and $H(z^*) \succ 0$ then $F(z^*+\delta_z) > F(z^*)$ so $z^*$ is a unique local minimum of $F$. 
\end{proof}

\noindent The following corollary follows directly from \cref{prop:unique}.

\begin{corollary}
\label{cor:deg}
    If the Hessian of a twice differentiable function is degenerate at a critical point then the critical point may not be unique locally.
\end{corollary}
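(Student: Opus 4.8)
The plan is to establish the corollary as the logical complement of \cref{prop:unique}. In the proof of that proposition, positive-definiteness of the Hessian is exactly what forces the strict inequality $F(z^* + \delta_z) > F(z^*)$; I would show that degeneracy removes precisely this ingredient. Concretely, let $v$ be a unit eigenvector of $H(z^*)$ associated with a zero eigenvalue. Then $v^\top H(z^*) v = 0$, so in the expansion $F(z^* + \delta_z) = F(z^*) + \nabla F(z^*)\delta_z + \delta_z^\top H(z^*)\delta_z + {\scriptstyle \mathcal{O}}(\|\delta_z\|^3)$ the second-order term vanishes when $\delta_z$ is taken along $v$. Along this flat direction the sign of $F(z^*+\delta_z) - F(z^*)$ is no longer controlled by the quadratic form but by the higher-order remainder, so the argument of \cref{prop:unique} no longer applies and uniqueness cannot be concluded.

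Because the statement only asserts that uniqueness \emph{may} fail, a single witness completes the proof. I would exhibit $F:\mathbb{R}^2 \to \mathbb{R}$ given by $F(x_1,x_2) = x_1^2$, whose Hessian is constant with eigenvalues $2$ and $0$, hence degenerate at every critical point. Its critical set is the entire line $\{x_1 = 0\}$, every point of which is a (non-strict) local minimizer, so no critical point is isolated and none is a unique local minimum. This directly realizes the failure predicted by the vanishing second-order term along the null eigenvector $v = (0,1)^\top$.

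To emphasize that the corollary claims \emph{inconclusiveness} rather than guaranteed non-uniqueness, I would contrast this with the scalar example $F(x) = x^4$, where $H(0) = F''(0) = 0$ is degenerate yet $x = 0$ remains the unique local (indeed global) minimizer. Taken together, the two examples show that a degenerate Hessian is compatible with both a unique and a non-unique minimum, which is exactly the content of ``may not be unique.''

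The main obstacle is conceptual rather than computational: one must respect the logical direction of the claim. The corollary is a statement about the \emph{failure of the sufficient condition} in \cref{prop:unique}, not an implication that degeneracy forces non-uniqueness. The proof therefore succeeds by producing a case in which the hypothesis of the proposition fails and its conclusion fails as well, rather than by attempting to deduce non-uniqueness from degeneracy in general, which would be false.
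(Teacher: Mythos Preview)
Your proposal is correct and in fact more complete than the paper's own treatment: the paper offers no proof at all, simply stating that the corollary ``follows directly from \cref{prop:unique}.'' Your explicit witness $F(x_1,x_2)=x_1^2$ and the contrasting example $F(x)=x^4$ go beyond what the paper provides; the latter even anticipates the paper's subsequent remark that a degenerate Hessian is not equivalent to non-uniqueness.
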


\Cref{cor:deg} has an important implication for optimization-based localization algorithms: if the Hessian of \cref{eq:nl} is degenerate at a critical point, then the computed estimate $\hat{x}^*$ may not be unique.
From a localization perspective, a non-unique solution to \cref{eq:nl} is analogous to elements of the state vector being unobservable; a situation that can have devastating consequences for mobile systems that base decisions and control actions on estimates computed via \cref{eq:nl}.
Consider, e.g., \cref{fig:degeneracy_example}, which shows examples of cost functions that have a non-degenerate and degenerate Hessian in \cref{fig:non_degenerate} and \cref{fig:degenerate}, respectively.
The cost function shown in \cref{fig:degenerate} has infinitely many minima that are all valid solutions, any of which can be computed by the localization algorithm and used in feedback.
A controller using a non-unique estimate will behave erratically as the estimate is not an accurate representation of the system's true state, and would likely lead to any number of catastrophic events such state/actuator constraint violations or collisions with obstacles.

\begin{remark}
    It is important to note that a unique critical point (or estimate) might still be obtainable even when the Hessian is degenerate.
    In other words, a degenerate Hessian is not equivalent to the non-uniqueness of critical points.
    Nonetheless, requiring that the Hessian never becomes degenerate {is sufficient to guarantee} that a critical point is unique.
\end{remark}

\begin{figure}[t!]
\vspace{0.05in}
\centering
  \subfloat[Cost function with non-degenerate Hessian.]{\includegraphics[trim=110 20 170 50, clip, width=.45\columnwidth]{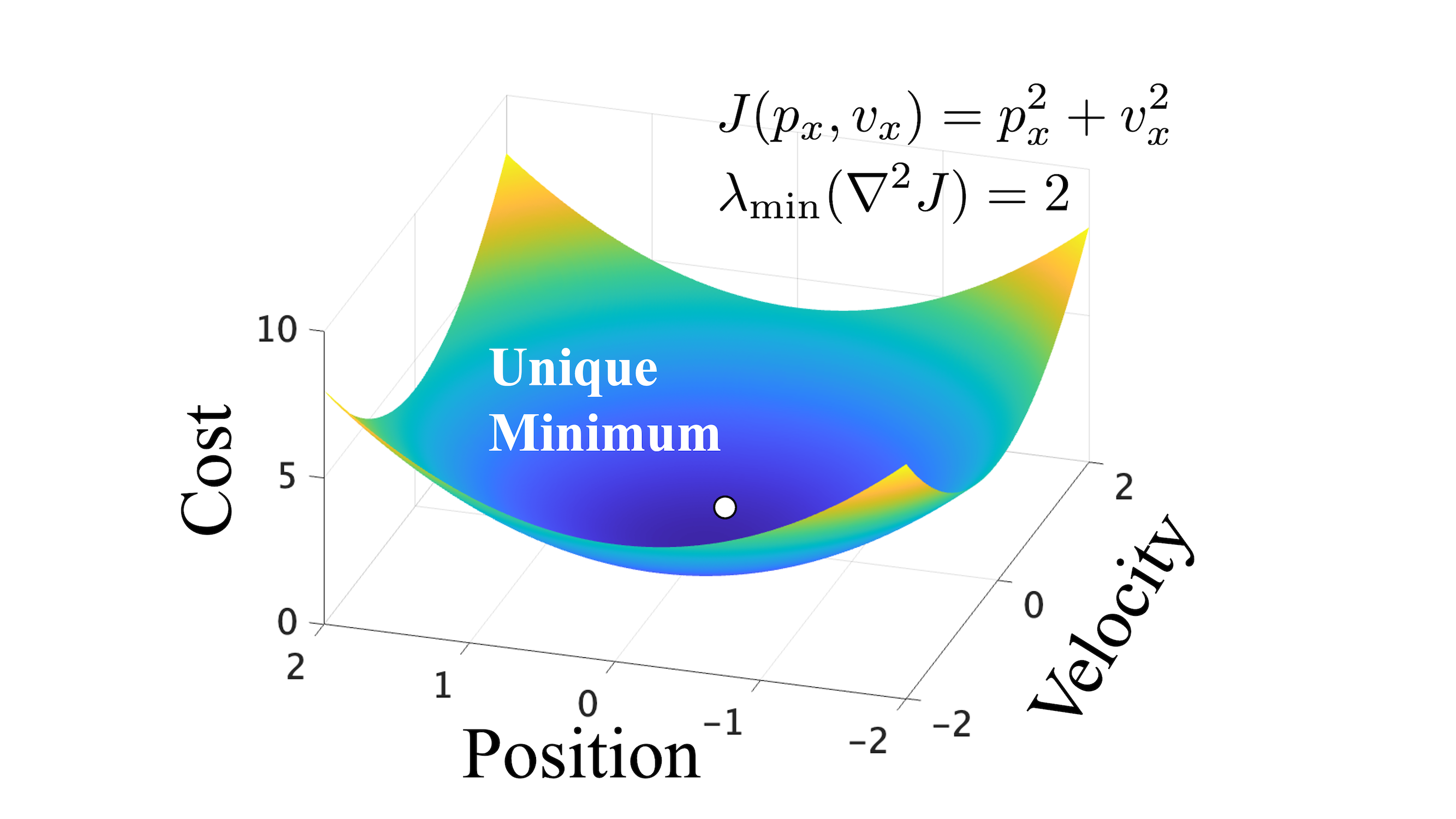}\label{fig:non_degenerate}}
  \hspace{1em}
  \subfloat[Cost function with degenerate Hessian.]{\includegraphics[trim=80 20 140 50, clip, width=.48\columnwidth]{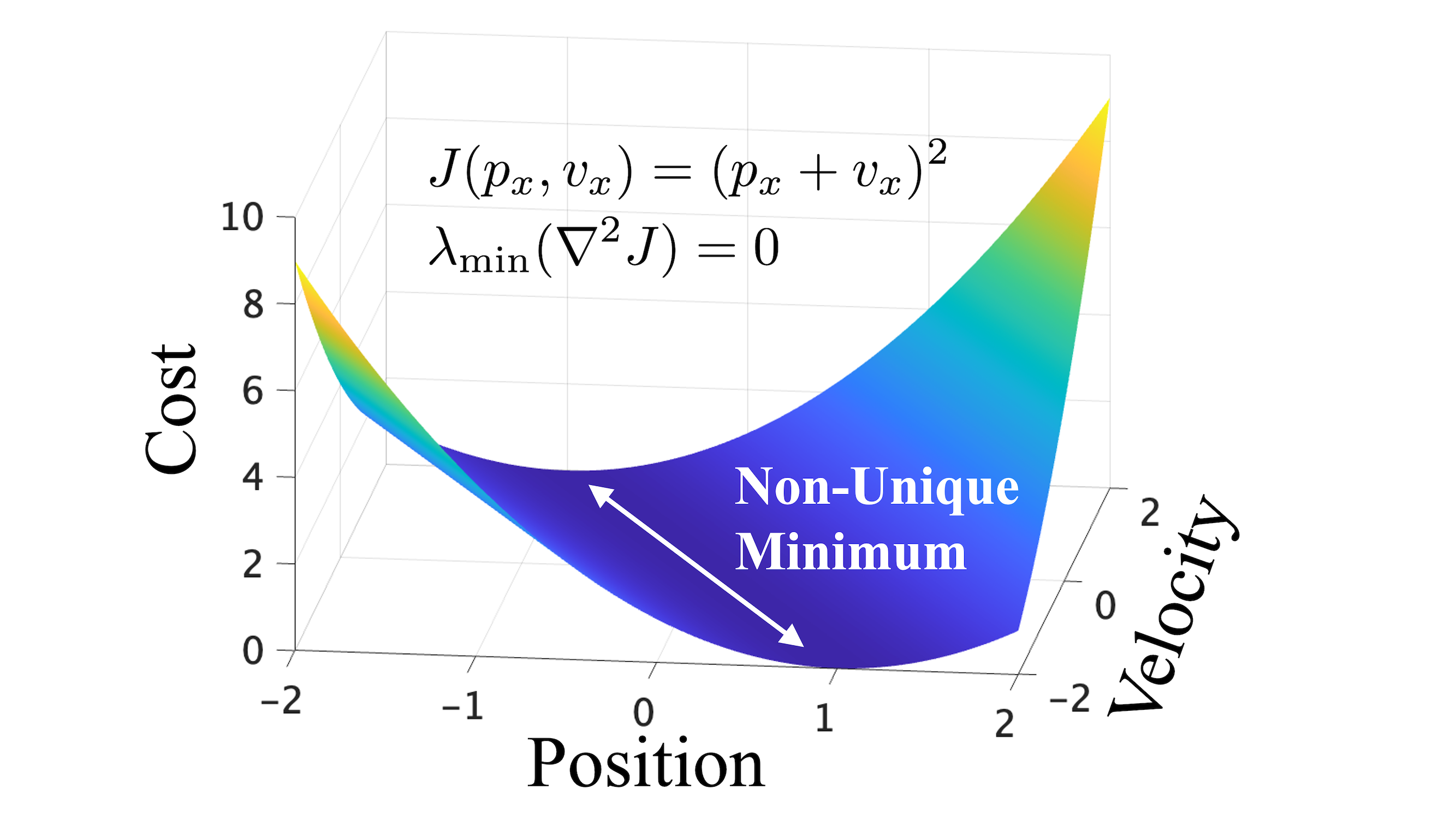}\label{fig:degenerate}}
  \caption{(a): Example cost function with non-degenerate Hessian that has a unique critical point. (b): Example cost function with degenerate Hessian that has a non-unique critical point. }
    \label{fig:degeneracy_example}
    \vskip -0.2in
\end{figure}

\subsection{Information Control Barrier Function}
Given the discussion above, we define the special class of {CBFs} that will be the focus for the remainder of this letter.

\begin{definition}
\label{def:icbf}
    Let $H : \mathbb{R}^n \times \mathbb{R}^{p} \rightarrow \mathcal{S}^n$ be the Hessian of the nonlinear least squares cost function \cref{eq:nl}. An {\emph{information control barrier function}} (I-CBF) is a control barrier function that renders the set ${S}=\{ x \in \mathbb{R}^n \, : \, \lambda_{\mathrm{min}} (H(x,m))\geq \lambda_{s}\}$ safe where $\lambda_{s} \in \mathbb{R}_+$.
\end{definition}

\begin{remark}
    The minimum eigenvalue of the Hessian and its derivatives are amenable to numerical computation if the Hessian of \cref{eq:nl} is a complicated function.
\end{remark}

\begin{remark}
\label{remark:detect}
    A noteworthy alternative to $S$ in \cref{def:icbf} is $S^c = \{ x \in \mathbb{R}^n \, : \, \lambda_{\mathrm{min}} (H(x,m))\leq \lambda_{s}\}$ which ensures the Hessian is always nearly degenerate. This can be used for detection avoidance applications (see \cref{sec:example}).
\end{remark}

The minimum eigenvalue of the Hessian of \cref{eq:nl} is a measure of how much information about the state is contained in the nonlinear least squares cost function. 
As such, the class of control barrier functions in \cref{def:icbf} ensures that localization information is sufficient for safe operation, i.e., the system's state can be discerned from the available measurements. 
A natural choice for a function $h$ to define the safe set in I-CBF  is 
\begin{align}
    \label{eq:eqn_h}
    h(x)&=\lambda_{\mathrm{min}}(H(x, m))-\lambda_{s}.
\end{align}

For \cref{eq:eqn_h} to be a valid CBF (based on \cref{def:CBF}), it must be continuously differentiable and so $\lambda_{\mathrm{min}}(H(x, m))$ must be continuously differentiable.
Handling the (potential) non-differentiability of the min operator is fairly straightforward, with approaches presented in \cite{glotfelter2017nonsmooth,molnar2023composing}. 
Establishing differentiability of the eigenvalues of the Hessian is more nuanced and requires a thorough investigation.
If other degeneracy measures were to be employed instead of \cref{eq:eqn_h}, e.g., the determinant or condition number of the Hessian, the question of differentiability of the eigenvalues of the Hessian must still be answered. 
Because of this, proving $\lambda(H)$ to be sufficiently smooth is integral to the proposed method.

The following proposition states an important result for matrices whose eigenvalues are simple, i.e., unique. 

\begin{proposition}
Let $A: \mathbb{R}^q \rightarrow \mathcal{S}^n$ be an $m$-times continuously differentiable real symmetric matrix with simple eigenvalues. Then, the eigenvalues are $m$-times continuously differentiable with respect to its argument.
\label{prop:simple_eigs}
\end{proposition}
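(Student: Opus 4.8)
The plan is to prove the statement locally at an arbitrary point and then patch the local branches together, using the implicit function theorem applied to the characteristic polynomial rather than to the full eigenvalue--eigenvector system (the latter would force me to also track a normalized eigenvector and is unnecessary when only the eigenvalues are sought). Fix an arbitrary $x_0 \in \mathbb{R}^q$ and let $\lambda_0$ be one of the simple eigenvalues of $A(x_0)$. Define $p : \mathbb{R} \times \mathbb{R}^q \rightarrow \mathbb{R}$ by $p(\lambda, x) = \det(A(x) - \lambda I)$, viewed jointly as a function of the scalar $\lambda$ and the argument $x$. Since the determinant is a polynomial in the matrix entries and each entry $a_{ij}(x)$ is $m$-times continuously differentiable by hypothesis, $p$ is $C^m$ in $x$; it is a polynomial, hence $C^\infty$, in $\lambda$, so $p$ is jointly $C^m$. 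By construction $p(\lambda_0, x_0) = 0$.

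The next step is to verify the nondegeneracy hypothesis of the implicit function theorem, namely $\partial p/\partial \lambda(\lambda_0, x_0) \neq 0$, and this is exactly where simplicity enters. Writing $p(\lambda, x_0) = \prod_{i=1}^n (\lambda_i(x_0) - \lambda)$, simplicity of $\lambda_0 = \lambda_k(x_0)$ means it is a root of algebraic multiplicity one, so upon differentiating in $\lambda$ every term of the product rule that retains the factor $\lambda_k(x_0) - \lambda$ vanishes at $\lambda_0$, leaving $\partial_\lambda p(\lambda_0, x_0) = -\prod_{i \neq k}(\lambda_i(x_0) - \lambda_0)$, which is nonzero precisely because $\lambda_0$ is distinct from every other eigenvalue. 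Equivalently, a simple root of a polynomial is not a root of its derivative.

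With both hypotheses in hand, the implicit function theorem yields an open neighborhood of $x_0$ and a unique $C^m$ function $x \mapsto \lambda(x)$ satisfying $\lambda(x_0) = \lambda_0$ and $p(\lambda(x), x) \equiv 0$, so $\lambda(x)$ is an $m$-times continuously differentiable eigenvalue branch of $A(x)$ near $x_0$. Because $\lambda_0$ is strictly separated from the remaining eigenvalues of $A(x_0)$ and the eigenvalues depend continuously on $x$, shrinking the neighborhood keeps $\lambda(x)$ simple there. Applying this to each of the $n$ eigenvalues and invoking the strict ordering $\lambda_1(x) < \cdots < \lambda_n(x)$, which is preserved by continuity under the global simplicity assumption, I can identify each ordered eigenvalue locally with one of these $C^m$ branches; since $x_0$ was arbitrary, every eigenvalue is $C^m$ on all of $\mathbb{R}^q$.

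The main obstacle is the nondegeneracy step: the whole argument hinges on $\partial_\lambda p \neq 0$ at a simple eigenvalue, since without it the implicit function theorem does not apply and, as the repeated-eigenvalue case shows, differentiability can genuinely fail. The accompanying regularity bookkeeping, namely confirming that $p$ is jointly $C^m$ so that the theorem returns a $C^m$ branch rather than a merely continuous one, and the continuity argument that simplicity persists on a neighborhood so the local branches can be consistently labeled into global eigenvalue functions, are routine but should be stated with care.
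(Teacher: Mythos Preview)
Your proposal is correct and takes essentially the same approach as the paper: both rest on the implicit function theorem applied to the characteristic polynomial, with simplicity of the eigenvalue supplying the required nondegeneracy $\partial_\lambda p \neq 0$. The paper outsources this step by citing Serre's Theorem 5.3 (analyticity of eigenvalues as functions of the matrix) and then composing with the $C^m$ map $x \mapsto A(x)$, whereas you carry out the IFT argument directly on $p(\lambda,x)$; your version is more self-contained, but the underlying idea is identical.
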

\begin{proof}
Using Theorem 5.3 from \cite{serre2010matrices} there are analytic functions $\lambda(A)$ and $v(A)$ which are the eigenvalues and eigenvectors respectively. Given that $A$ is $m$-times continuously differentiable finishes the result. The key element in the proof in \cite{serre2010matrices} comes from the implicit function theorem, which requires simple eigenvalues.
\end{proof}

If the eigenvalues are non-simple (i.e. they cross\footnote{Interestingly, eigenvalue crossing has been a polarizing topic in quantum mechanics and other related fields, as the question of the so-called \emph{avoidance crossing} is still debated.}) then they still might be differentiable.
When the eigenvalues are non-simple, the following theorem can be used to guarantee differentiability but at the expense of stronger assumptions.

\begin{proposition}
     Let $A : \mathbb{R} \rightarrow \mathcal{S}^n$ be a real symmetric matrix which is analytic in its argument. Then, the eigenvalues are analytic with respect to its argument.
    \label{prop:analytic_eigs}
\end{proposition}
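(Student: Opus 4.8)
The plan is to prove this classical result (essentially Rellich's theorem) by combining analytic perturbation theory with the self-adjointness of $A$, which is precisely what makes the one-parameter case work. Fix a point $t_0 \in \mathbb{R}$; without loss of generality take $t_0 = 0$. Since each entry $a_{ij}(t)$ is analytic, it extends to a holomorphic function on a complex disk about $0$, so $A(z)$ is a holomorphic matrix-valued function there. The eigenvalues are the roots of the characteristic polynomial $p(z,\lambda) = \det(\lambda I - A(z))$, whose coefficients are holomorphic in $z$ (being polynomials in the $a_{ij}(z)$), and for real $z=t$ the matrix $A(t)$ is symmetric, so all $n$ roots are real. The task thus reduces to showing that the roots of a degree-$n$ polynomial with analytic coefficients that are real for every real $t$ admit an analytic parametrization through $t=0$.

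First I would dispose of the generic case. Wherever an eigenvalue is simple we have $\partial_\lambda p \neq 0$, so the implicit function theorem gives a locally analytic root, recovering \cref{prop:simple_eigs}. The difficulty is concentrated at the isolated points where eigenvalues collide, namely the real zeros of the discriminant of $p(t,\cdot)$. After factoring out any eigenvalue that remains multiple identically (such a branch is constant on its eigenspace and trivially analytic), the discriminant is analytic and not identically zero, so its real zeros are isolated. To localize at such a point I would use the Riesz spectral projection: for an eigenvalue $\lambda_0$ of $A(0)$ of multiplicity $k$, pick a small circle $\Gamma$ in the complex $\zeta$-plane enclosing $\lambda_0$ and no other eigenvalue of $A(0)$, and set
\[
P(z) = \frac{1}{2\pi i} \oint_{\Gamma} (\zeta I - A(z))^{-1}\, d\zeta .
\]
For $z$ near $0$ this is holomorphic and of constant rank $k$, and it isolates exactly the $k$ eigenvalues of $A(z)$ that bifurcate from $\lambda_0$ (the $\lambda_0$-group). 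This reduces the problem to understanding these $k$ colliding branches near $z=0$.

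The crux, and the main obstacle, is to rule out genuine branch points within the $\lambda_0$-group. For a general holomorphic family the $k$ colliding eigenvalues are given near $0$ by Puiseux series $\lambda(z) = \lambda_0 + \sum_{j \geq 1} c_j\, z^{j/q}$ with some ramification index $q > 1$, so a priori they are only Hölder-continuous rather than analytic. I would show $q=1$ using realness. If $q > 1$ with leading coefficient $c_1 \neq 0$, the $q$ branches issuing from $\lambda_0$ take the values $\lambda_0 + c_1\,\omega^{j} z^{1/q}$, $\omega = e^{2\pi i/q}$, to leading order, which are the vertices of a regular $q$-gon centered at $\lambda_0$. Evaluating on both sides of $t = 0$, these points cannot all remain on the real axis—for $q \geq 3$ the $q$-gon already fails to be real for $t>0$, and for $q=2$ the value $c_1 t^{1/2}$ becomes nonreal as $t$ crosses to $t<0$—contradicting that every eigenvalue of the Hermitian matrix $A(t)$ is real. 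Forcing the branch values onto the real line on both sides of $0$ therefore makes all fractional-power coefficients vanish, so effectively $q=1$, the Puiseux series are ordinary power series, and each branch of the $\lambda_0$-group is analytic at $0$. Since $t_0$ was arbitrary and the self-adjoint structure prevents any loss of analyticity at the crossings, the locally analytic branches can be continued consistently into a global analytic parametrization $\lambda_1(t), \dots, \lambda_n(t)$ of the eigenvalues, completing the proof. I expect the branch-point argument to demand the most care, as it is exactly the step that fails for multi-parameter families and is the whole reason the hypothesis restricts $A$ to a scalar argument.
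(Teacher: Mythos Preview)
Your proposal is correct: it is the classical Rellich argument---holomorphic extension of $A$, Riesz projection to isolate a spectral group, Puiseux expansion of the colliding branches, and the reality constraint from self-adjointness on \emph{both} sides of $t=0$ to force the ramification index to $1$. The paper, by contrast, gives no self-contained argument at all; its entire proof is a one-line citation to a reference treating the more general Denjoy--Carleman setting. So your route is different in that it actually exhibits the mechanism, and in particular makes explicit \emph{why} the single-parameter hypothesis is essential (the two-sided realness argument that kills fractional exponents has no analogue for multi-parameter families), something the paper only gestures at informally. One small imprecision worth tightening: the phrase ``factoring out any eigenvalue that remains multiple identically (such a branch is constant on its eigenspace and trivially analytic)'' is loose---what you really use there is that the Riesz projection $P(z)$ onto a group of constant total multiplicity $k$ is holomorphic, and when all $k$ eigenvalues coincide their common value is $\tfrac{1}{k}\operatorname{tr}(A(z)P(z))$, which is manifestly analytic. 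This does not affect the soundness of the overall sketch.
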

\begin{proof}
    The proof is a sub-case of one presented in \cite{kriegl2011denjoy1}.
\end{proof}

Given \cref{prop:simple_eigs,prop:analytic_eigs}, the Hessian of \cref{eq:nl} must now be studied to determine if either or when the aforementioned propositions are applicable. 
The first property that needs to be established is that, for the Hessian $H$ to be continuously differentiable in time, so too must $x$ and $m$.
Two assumptions are made to ensure $H$ is continuously differentiable.
\begin{assumption}
The nonlinear least squares cost function $J(x,m)$ is twice continuously differentiable in $x$.
\label{assump_cf}
\end{assumption}

\begin{assumption}
\label{assump_mm}
The state $x$ and measurement $m$ (and so to the measurement model $\mathfrak{m}$) are continuously differentiable.
\end{assumption}

With Assumptions~\ref{assump_cf} and \ref{assump_mm}, the Hessian $H$ is continuously differentiable in time but, as noted above, this alone is not enough to guarantee differentiability of the eigenvalues {everywhere}.
From \cref{prop:analytic_eigs}, if $H$ depends analytically on a single parameter then the eigenvalues of $H$ are differentiable even in the non-simple case.
These two cases lead to two different methods of constructing valid I-CBFs: one where $H$ 
is analytic and another where the eigenvalues of $H$ are always simple, i.e., the eigenvalues never cross.

\subsection{Analytic Hessian}
The goal of this section is to prove that the eigenvalues of the Hessian $H$ are sufficiently smooth.
From \cref{prop:analytic_eigs}, the eigenvalues of $H$ are analytic (and therefore smooth) provided that 
$H$ itself is an analytic function of only a \emph{single} parameter.
In our framework, this single parameter is time $t$, as the implementation of a controller renders all system dynamics solely time-dependent. 
To formalize this and prove differentiability for \cref{eq:eqn_sys}, we assume the following.

\begin{assumption} 
\label{assump_new}
The dynamics, measurement model, and control in \cref{eq:eqn_sys} are all analytic. 
\end{assumption}

\begin{theorem}
\label{thm:analytic}
    Consider the nonlinear system \cref{eq:eqn_sys} that satisfies Assumption~\ref{assump_new}. 
    If the nonlinear least squares cost function $J(x,m)$ in \cref{eq:nl} and measurements $m$ are sufficiently smooth then the eigenvalues of the Hessian $H(x,m) = \nabla^2 J(x,m)$ are analytic with respect to time $t$.
\end{theorem}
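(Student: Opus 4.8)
The plan is to exploit the observation highlighted before the statement: closing the loop with an analytic feedback collapses the multivariate dependence of $H$ into dependence on the single parameter $t$, after which Proposition~\ref{prop:analytic_eigs} applies directly. The argument chains together three analyticity facts: analyticity of the closed-loop trajectory, analyticity of the Hessian in its arguments, and closure of real-analyticity under composition.

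First I would substitute the analytic control $u$ into \cref{eq:eqn_sys} to obtain the closed-loop vector field $F(x,t) := f(x) + g(x)\,u$. By Assumption~\ref{assump_new}, $f$, $g$, and $u$ are analytic, and since sums, products, and compositions of real-analytic maps are real-analytic, $F$ is analytic in $(x,t)$.

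Next—and this is the step I expect to be the crux—I would invoke the classical theorem on analytic dependence of solutions to ordinary differential equations: when the right-hand side of $\dot x = F(x,t)$ is real-analytic, the solution $t \mapsto x(t)$ is real-analytic on its interval of existence (provable, e.g., via the method of majorants / Cauchy's existence theorem). This is precisely what justifies treating time as the ``single parameter'' demanded by Proposition~\ref{prop:analytic_eigs}: although $H$ depends on $x$ and $m$, along any realized trajectory this dependence is funneled through the analytic curve $t \mapsto x(t)$. Reading the hypothesis that $m$ is sufficiently smooth as $m(t)$ being analytic in $t$—which holds automatically in the noise-free idealization $m(t) = \mathfrak{m}(x(t))$, since $\mathfrak{m}$ is analytic and $x(t)$ is analytic—we obtain an analytic curve $t \mapsto (x(t), m(t))$.

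Finally I would note that, reading ``$J$ sufficiently smooth'' as $J$ analytic in $(x,m)$, its second partial derivatives, i.e., the entries $H_{ij}(x,m)$ of $H = \nabla^2 J$, are themselves analytic. Composing the analytic map $(x,m)\mapsto H(x,m)$ with the analytic curve $t\mapsto(x(t),m(t))$ yields, entrywise, an analytic matrix-valued function $t \mapsto H(x(t),m(t)) \in \mathcal{S}^n$ of the single parameter $t$. Proposition~\ref{prop:analytic_eigs} then gives that the eigenvalues of $H$ are analytic in $t$, completing the argument. The only genuine subtlety, beyond the ODE-analyticity step, is bookkeeping: one must interpret the informal ``sufficiently smooth'' hypotheses on $J$ and $m$ as analyticity, since mere $C^\infty$ regularity would not suffice to feed Proposition~\ref{prop:analytic_eigs}.
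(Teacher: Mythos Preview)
Your proposal is correct and follows essentially the same route as the paper: establish analyticity of the closed-loop trajectory via an analytic-ODE existence result (the paper cites Cauchy--Kovalevskaya, you cite Cauchy/majorants, same content for ODEs), funnel $H(x,m)$ through this trajectory to get a single-parameter analytic family, and invoke Proposition~\ref{prop:analytic_eigs}. Your version is in fact more careful than the paper's sketch---you make explicit the closure-under-composition steps and flag that ``sufficiently smooth'' must be read as analytic for the argument to go through, a point the paper leaves implicit.
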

\begin{proof}
    The Cauchy-Kovalevskaya lemma guarantees that the solution to \cref{eq:eqn_sys} is analytic locally as the dynamics and control input are analytic by assumption.
    Since the solution to \cref{eq:eqn_sys} is a local analytic function of $t$, then $H(x,m) \approx H(x(t), \mathfrak{m}(x(t))) = H(t)$ locally. {Now that} $H$ is a function of a single parameter {$t$}, \cref{prop:analytic_eigs} applies, so the eigenvalues of the Hessian are guaranteed to be differentiable.
\end{proof}

\begin{remark}
It is possible that the eigenvalues of the Hessian are differentiable even without the assumptions made for \cref{thm:analytic} when working with specific types of systems. 
Future work will investigate characterizing the systems in which the ``avoidance of crossing" phenomenon occurs \cite{lax2007linear}.
\end{remark}

One of the stronger assumptions made in \cref{thm:analytic} is that the control input $u$ is analytic, as it is often computed via a quadratic program for safety-critical control.
While conditions under which quadratic programs yield analytic control are discussed in \cite{mestres2023regularity}, often quadratic programs produce only continuous control.
Cohen et al. \cite{cohen2023characterizing} explore cases where this issue can be mitigated by creating a smooth version of the controller that still satisfies the CBF condition but at the expense of possibly more control effort.
One solution discussed in \cite{cohen2023characterizing} involves modifying the closed-form solution to the quadratic program to make the controller analytic. 
If the quadratic program is of the form
\begin{equation}
    \begin{tabular}{rl}
        $\min_{u\in U}$ & $\ell(u,u_d)$\\
        subject to & $L_f h(x) + L_g h(x) u \leq  -\alpha(h(x))$,\\
    \end{tabular}
\label{eqn_QP}
\end{equation}
where $\ell(u,u_d)$ is convex in $u$, then it admits a closed form solution which can be found using KKT conditions \cite{boyd2004convex}. 
The closed form solution to \cref{eqn_QP} is, in general, not analytic as it is often piecewise.
For instance, when $\ell(u, u_d) = \tfrac{1}{2}\|u-u_d\|^2$, the closed-form solution for $u$ can be expressed as $u=u_d+\text{ReLu}(-\Psi)\frac{L_gh}{\|L_g h \|^2 }$ where $\Psi = L_fh + L_gh \, u_d + \alpha(h)$, which is not differentiable at $\Psi = 0$. 
However, an arbitrarily close analytic approximation of the ReLu function can be constructed while ensuring the CBF condition is satisfied.
For example, one analytic approximation of $u$ is
\begin{equation}
    u=u_d+\frac{1}{c}\ln(1+\exp(-c \Psi))\frac{L_gh}{\|L_g h\|^2},
    \label{eqn:smooth_relu}
\end{equation}
for $c > 1$ always satisfies the conditions for a CBF at the expense of conservatism.
The function $\alpha \in \mathcal{K}_{\infty}$ along with parameter $c$ can be adjusted to offset the conservatism.

Having shown the eigenvalues of the Hessian are analytic (and hence smooth) under the conditions needed for \cref{thm:analytic}, the only remaining item to address is the differentiability of the min operator in \cref{eq:eqn_h}.
We adopt the strategy in \cite{molnar2023composing} that utilizes a smooth under-approximation of the min operator to achieve differentiability without losing set invariance.
Specifically, let $\lambda_i (x,m)$ be an eigenvalue of the Hessian $H$.
The smooth under-approximation to \cref{eq:eqn_h} which ensures that the set invariance condition is achieved is 
$h^\circ(x) = -\frac{1}{\kappa} \ln(\sum_{i\in\mathbb{Z}} \text{exp}(-\kappa\,(\lambda_i(x,m) - \lambda_s))) \leq \min_{i\in\mathbb{Z}} \, \lambda_i(x,m) - \lambda_s =  h(x)$, $\kappa \in \mathbb{R}_+$.
Hence, replacing $h$ with $h^\circ$ in the set invariance condition stated in \cref{def:CBF} will ensure the minimum eigenvalue of the Hessian never becomes less than (or, as mentioned in \cref{remark:detect} greater than) the specified threshold $\lambda_s$.
To summarize, we smooth the barrier function $h$ from \cref{eq:eqn_h} to obtain a differentiable alternative $h^\circ$ to circumvent the non-differentiability of the min operator.
We then apply \cref{eqn:smooth_relu} to obtain an analytic control input so that the eigenvalues of the Hessian are themselves analytic and therefore differentiable.

\subsection{Anti-Crossing Control Barrier Function}
Requiring the dynamics and controller to be analytic is needed for the eigenvalues to remain differentiable if the minimum eigenvalue is non-simple. 
However, if it can be shown the eigenvalues are always simple, i.e., never cross, then the analytic requirement can be eliminated and the differentiability of the eigenvalues is guaranteed. 
Furthermore, the initial minimum eigenvalue will always be the minimum eigenvalue so the min operator can be removed from \cref{eq:eqn_h} and the smoothing under-approximation modification outlined in the previous subsection is no longer necessary. 
It is therefore desirable to prevent eigenvalue crossing.
We now introduce an additional CBF that works in conjunction with an I-CBF to prevent the eigenvalues from crossing.
Fundamentally, the anti-crossing CBF maintains a small gap between each pair of eigenvalues ensuring the difference between them is at least ${\delta^{\times}}\in \mathbb{R}_+$.
This can be imposed by defining the CBF
\begin{equation}
\label{eq:h_anti}
    h_i^{\times}(x)=\lambda_{i+1}(x,m)-\lambda_i(x,m) - {\delta^{\times}}, ~~ i=1, \dots, n-1.
\end{equation}
Since the eigenvalues are now required to not cross, we have $\lambda_{i+1}>\lambda_i$ and so $h_i^{\times}$ is positive when $\lambda_{i+1}-\lambda_i > {\delta^{\times}}$ thereby defining the desired super-level set. 
There are several alternatives to \cref{eq:h_anti} but the advantage of our choice is its simplicity and effectiveness (as was also the case for \cref{eq:eqn_h}).

To show \cref{eq:h_anti} is minimally invasive, we leverage the following result that provides insights about the structure of the manifold on which the eigenvalues could cross.

\begin{proposition}
    The eigenvalues of an $n\times n$ real symmetric matrix can only cross on a manifold of dimension $n-2$.
       \label{prop:manifold}
\end{proposition}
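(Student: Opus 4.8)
The plan is to prove the classical von Neumann--Wigner codimension-two result and then pull it back through the map $x \mapsto H(x,m)$. The heart of the matter is the $2 \times 2$ case. First I would observe that a real symmetric matrix $\left[\begin{smallmatrix} a & b \\ b & c \end{smallmatrix}\right]$ has eigenvalues $\tfrac{a+c}{2} \pm \tfrac{1}{2}\sqrt{(a-c)^2 + 4b^2}$, so its two eigenvalues coincide if and only if $(a-c)^2 + 4b^2 = 0$, i.e. if and only if $a-c = 0$ \emph{and} $b = 0$. These are two independent scalar conditions on the three-dimensional space of symmetric $2\times 2$ matrices, so the degeneracy locus is the line $\{a=c,\, b=0\}$, which has codimension two.

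Next I would lift this to the $n \times n$ case by a local reduction. At a point $A_0 \in \mathcal{S}^n$ where exactly two eigenvalues coincide at a value $\mu$ (the generic crossing, with all other eigenvalues simple and bounded away from $\mu$), the spectral projection $P(A)$ onto the near-$\mu$ eigenspace depends analytically on $A$ in a neighborhood of $A_0$, because $\mu$ is separated from the rest of the spectrum by a fixed gap. Compressing $A$ to the two-dimensional range of $P(A)$ yields a symmetric $2\times 2$ matrix-valued analytic function $\widetilde{A}(A)$, and the double eigenvalue persists if and only if the trace-free part of $\widetilde{A}$ vanishes. The trace-free symmetric $2\times 2$ matrices form a two-dimensional space, giving exactly two defining equations; I would verify via the implicit function theorem that their differentials are independent at $A_0$, so near $A_0$ the degeneracy set is a smooth submanifold of $\mathcal{S}^n$ of codimension two.

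Finally I would transfer this to the state space. The Hessian defines a map $x \mapsto H(x,m)$ from $\mathbb{R}^n$ into $\mathcal{S}^n$, and under the generic assumption that this map is transverse to the codimension-two crossing variety, the preimage---the set of states at which two eigenvalues cross---is a submanifold of $\mathbb{R}^n$ of codimension two, hence of dimension $n-2$.

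The step I expect to be the main obstacle is the claim that the two defining equations are genuinely independent and that the crossing locus is actually a manifold rather than merely a variety. Independence is precisely what produces codimension two (rather than codimension one), and it hinges on the local $2\times 2$ reduction via the spectral projection. One must also restrict attention to the top stratum where exactly two eigenvalues meet, since coincidences of three or more eigenvalues occur on strata of strictly higher codimension and introduce singularities into the degeneracy set.
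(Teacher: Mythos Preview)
Your argument is correct and in fact more complete than what the paper provides. The paper does not give a self-contained proof; it cites Magnus and Neudecker and offers only a one-sentence sketch based on parameter counting: a generic symmetric matrix has $\tfrac{n(n+1)}{2}$ free parameters, while a symmetric matrix with a double eigenvalue, written as $Q\,\mathrm{diag}(\lambda_1,\dots,\lambda_n)\,Q^\top$ with $\lambda_i=\lambda_{i+1}$, loses one parameter from the eigenvalue list and one from the orthogonal frame (the repeated eigenspace carries an $O(2)$ redundancy), yielding codimension two.

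Your route is different in substance: you localize via the analytic spectral projection onto the near-crossing pair, reduce to the trace-free part of a $2\times 2$ compression, and read off two defining equations directly. This buys you an explicit local chart and a clean application of the implicit function theorem, whereas the dimension-counting argument must separately verify that the parametrization $(\lambda,Q)\mapsto Q\Lambda Q^\top$ has the correct rank near the degeneracy stratum. You also make explicit the transversality assumption needed to pull the codimension-two locus in $\mathcal{S}^n$ back through $x\mapsto H(x,m)$ and conclude dimension $n-2$ in state space; the paper's proof omits this step entirely, even though the surrounding discussion (avoiding lines in $\mathbb{R}^3$ when $n=3$) depends on it. Your caveat about higher-order coincidences and the stratified nature of the full degeneracy set is also appropriate and not addressed in the paper.
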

\begin{proof}
    A detailed proof can be found in \cite{magnus2019matrix}, but the core idea of the proof is based on the comparison between the number of independent parameters in a symmetric matrix and the number of parameters in its diagonalized representation. 
\end{proof}

The anti-crossing CBFs $h_i^\times$ are minimally invasive as eigenvalues of the Hessian can only cross on a manifold of size $n-2$ where $n$ is the size of the Hessian as stated in \cref{prop:manifold}.
This property implies that the system's trajectory will avoid this manifold without significantly deviating from the desired control input $u_d$ or the desired state $x_d$, as it can {circumvent} the lower dimensional manifold {with minimal change}.
For instance, if $n = 3$, the points to avoid are one-dimensional, meaning that the trajectory must avoid specific lines in three-dimensional space.

Regarding implementation, the analytic method requires stronger assumptions (Assumption \ref{assump_new}) but can be computationally faster as seen in \cref{sec:example} while the anti-crossing method is more intuitive and uses fewer hyperparameters.

\subsection{Predictive Measurement Models}
\label{sub:predictive}
    A subtle but important element of the proposed approach is the need to have information regarding the rate of change of measurements with respect to time. 
    In other words, the approach utilizes a form of measurement prediction to achieve set invariance in a similar manner to how the system's dynamics are used.
    For the sake of clarity, let $\lambda_{\text{min}}$ be simple. 
    Then, from \cref{eq:eqn_h} and \cref{def:icbf}, $\dot{h}(x,m) = \frac{\partial h}{\partial x} \dot{x} + \frac{\partial h}{\partial m} \dot{m} \geq - \alpha(h(x,m))$
    where the time variation of the measurements $m$ must be accounted for to achieve set invariance.
    Intuitively, the derivative can be thought of as providing an instantaneous prediction of the next measurement.
    In practice, one rarely has access to $\dot{m}$, so it must be estimated directly or approximated via $\mathfrak{m}(x)$.
    Future works will investigate this trade-off, but we expect the latter---which we call using \emph{predictive measurement models}---will be a fruitful research area given the various sensing modalities, such as vision, LiDAR, event camera, etc., used on mobile systems.  
    This work will utilize the approximation $m \approx \mathfrak{m}(x)$.

\subsection{Higher Relative Degree I-CBFs}
The previous analysis can be extended to control barrier functions that have a relative degree greater than one. 
Various approaches for constructing CBFs suitable for higher relative-degree systems have been suggested in the literature \cite{nguyen2016exponential,xiao2019control,cohen2024constructive}. 
Among these, one method that works well for I-CBFs is that presented in \cite{xiao2019control}.
In this method, a new CBF $h_r$ is constructed with a relative degree of one that in the limit has the same safe set as the original $h$.
This new $h_r$ is then an implementable I-CBF.
In the previous analysis, if the relative degree for the original $h$ is $r>1$, Assumption \ref{assump_mm} needs to be modified to $r$-times continuously differentiable, to ensure that the eigenvalues are sufficiently smooth. 
This method for higher relative degrees works for both the analytic Hessian and anti-crossing CBF, as is demonstrated in \cref{sec:example}.

\section{Illustrative Example}
\label{sec:example}

The proposed method is demonstrated on a 2-D double integrator system that uses beacons at known locations for localization. 
{This system} has a relative degree of two so we employ \cite{xiao2019control} to achieve set invariance.
Specifically for the analytic method, we use CBF $h_r^\circ (x)=h^\circ(x) ( 1+ \sigma(L_f^{r-1} h^\circ(x)))-\delta$ where $\sigma : \mathbb{R} \rightarrow \mathbb{R} : \zeta \mapsto (1 + \exp({-\zeta}))^{-1}$ and $\delta \in \mathbb{R}_{+}$.
For the anti-crossing method we use $h_r(x)=h(x) ( 1+ \sigma(L_f^{r-1} h(x)))-\delta$ and $h_r^\times(x)=h^\times(x) ( 1+ \sigma(L_f^{r-1} h^\times(x)))-\delta$.
At each time step the system's state is computed by solving a nonlinear least squares problem via gradient descent, {with either} range-only {or} bearing-only measurements as these are common in practice.
The estimated state and the Hessian of the nonlinear least squares cost are then used by the safety controller to override an LQR controller that is trying to drive the system to a desired location.
The gradient of the minimum eigenvalue was computed analytically.
Per \cref{sub:predictive}, the measurement model $\mathfrak{m}(x)$ was used to predict the rate of change of the measurements. 
The position of the system is $(p_x,\,p_y)$ with beacons located at $({b^k_{x}},\,b^k_{y})$ for $k=1,2,3$.

\subsection{Range-Only Measurements}
The measurement model {for range-only beacons} is $\mathfrak{m}_k(p_x ,p_y )=\sqrt{(p_x-b^k_{x})^2+(p_y-b^k_{y})^2}$ for $k=1,2,3$.
The covariance matrix $\Sigma(p_x,p_y)$ for each measurement was chosen to be $\Sigma(p_x,p_y)_{k,k}={1+\exp({m_k(p_x,p_y)-10})}$ to capture measurement dropout due to distance from the $k^{\text{th}}$ beacon. 
The off-diagonal entries were set to zero.

\textit{Localization.}
For localization, the safe region where $\lambda_s\geq5$ is shown as the region inside the red line in \cref{fig:distance_trajectories} with the grey scale that denotes the value of $\lambda_{\text{min}}$ as a function of $p_x$ and $p_y$.
Both the analytic and anti-crossing methods demonstrated good performance in \cref{fig:distance_barrier}, ensuring that $h$ remained positive and that the minimum eigenvalue stayed above the specified threshold.
The anti-crossing method was more aggressive as the system took a more direct path to the goal and subsequently got closer to the boundary of $\mathcal{S}$.
Moreover, the max control effort was twice as large in this method compared with the analytic I-CBF, highlighting the trade-off between the analytic and anti-crossing formulations.
The average per-step computation time was 0.7 ms for the analytic method and 35.6 ms for the anti-crossing method.
The majority of the computation time for the anti-crossing method came from the optimization while formulating the Hessian and its derivatives took only 0.08 ms.

\textit{Detection Avoidance.} As noted in \cref{remark:detect} the method in this letter can also be used for detection avoidance by ensuring that the minimum eigenvalue of the hessian remains sufficiently small. The safe set where detection can be avoided is $\lambda_s\leq5$ and is indicated by the red line in \cref{fig:distance_trajectories_avoid}. The system remained outside of the detectable set as the value of the barrier function remained positive \cref{fig:distance_barrier_avoid}. Like with the localization, the anti-crossing method was more aggressive as compared to the analytic method.

\begin{figure*}[t!]
\centering
  \subfloat[Localization trajectory.]{\includegraphics[trim=0 40 0 50, clip, width=.43\textwidth]{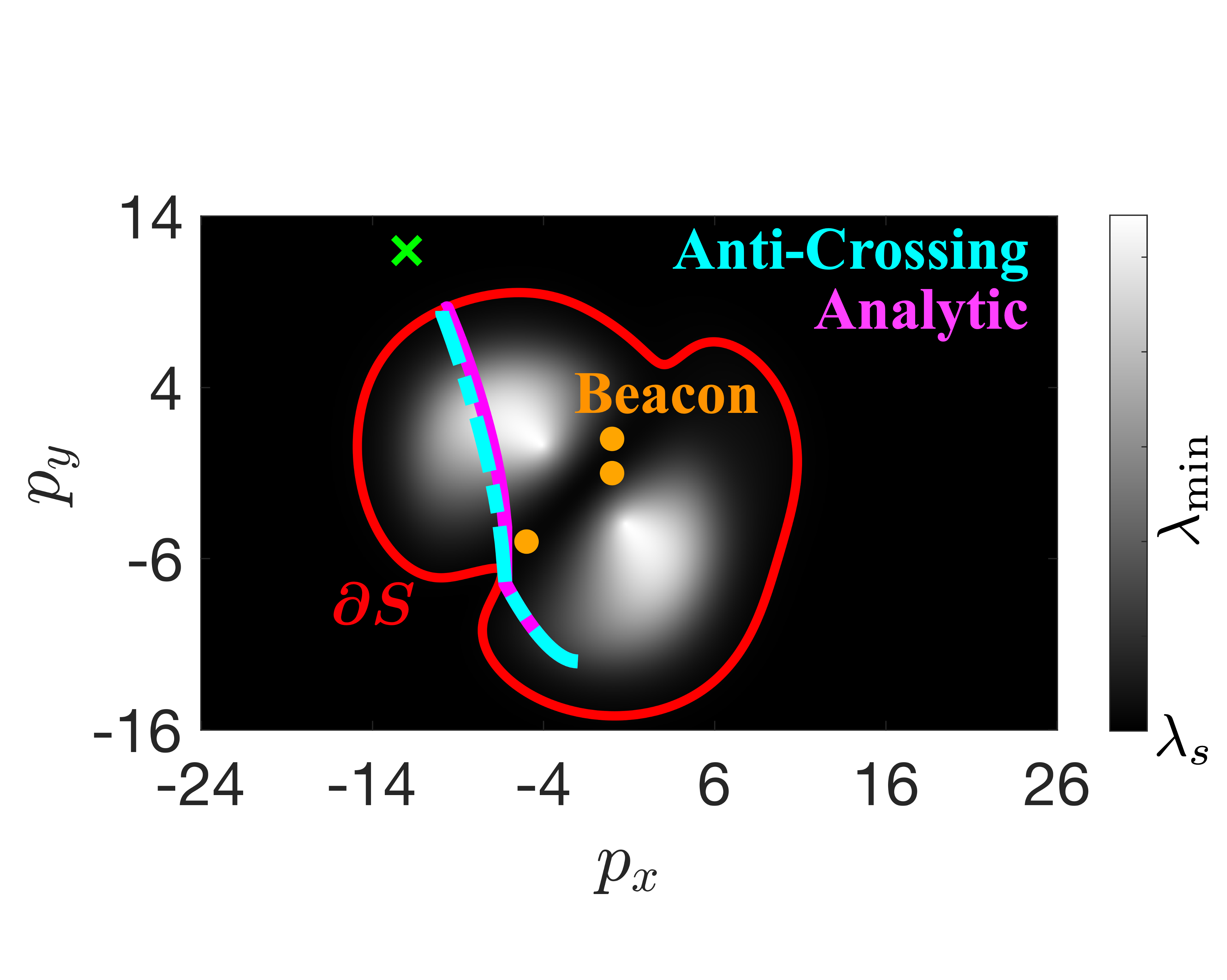}\label{fig:distance_trajectories}}
  \hspace{0.2em}
  \subfloat[Detection avoidance trajectory.]{\includegraphics[trim=0 40 0 50, clip, width=.43\textwidth]{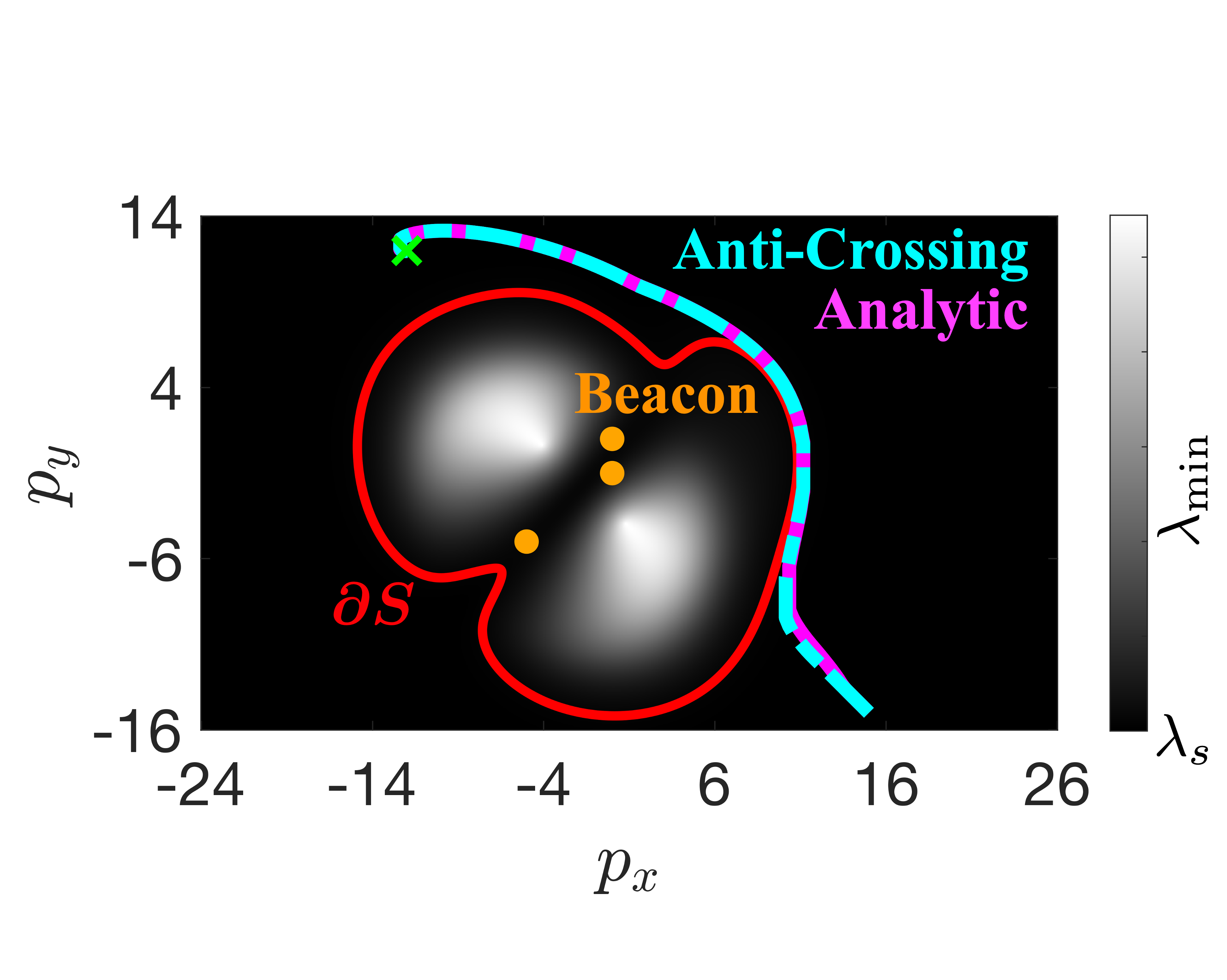}\label{fig:distance_trajectories_avoid}}
  \hspace{0.2em}
  \subfloat[Localization barrier value.]{\includegraphics[trim=0 5 0 0, clip, width=.43\textwidth]{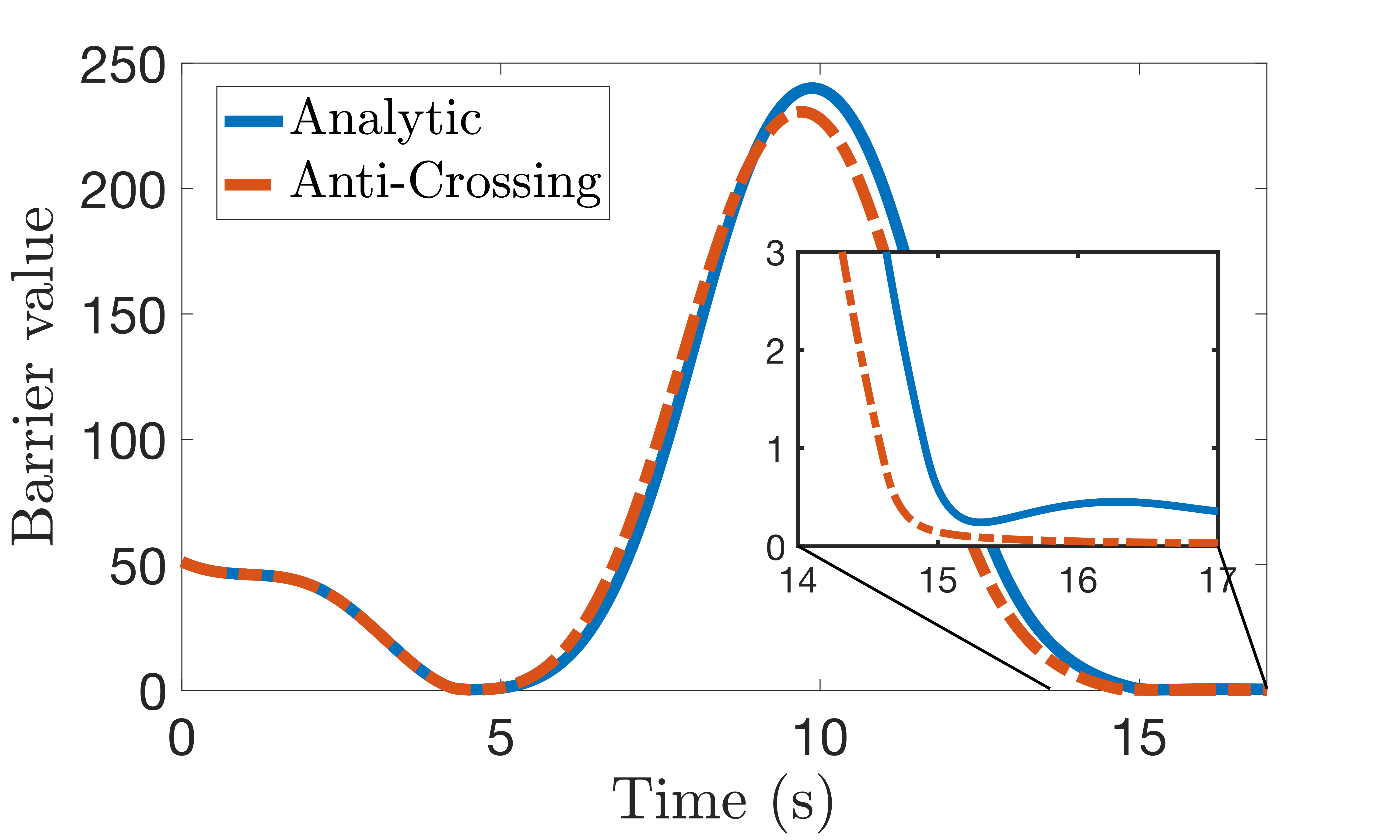}\label{fig:distance_barrier}}
  \hspace{0.2em}
  \subfloat[\centering Detection avoidance barrier value.]
  {\includegraphics[trim=0 5 0 0, clip, width=.43\textwidth]{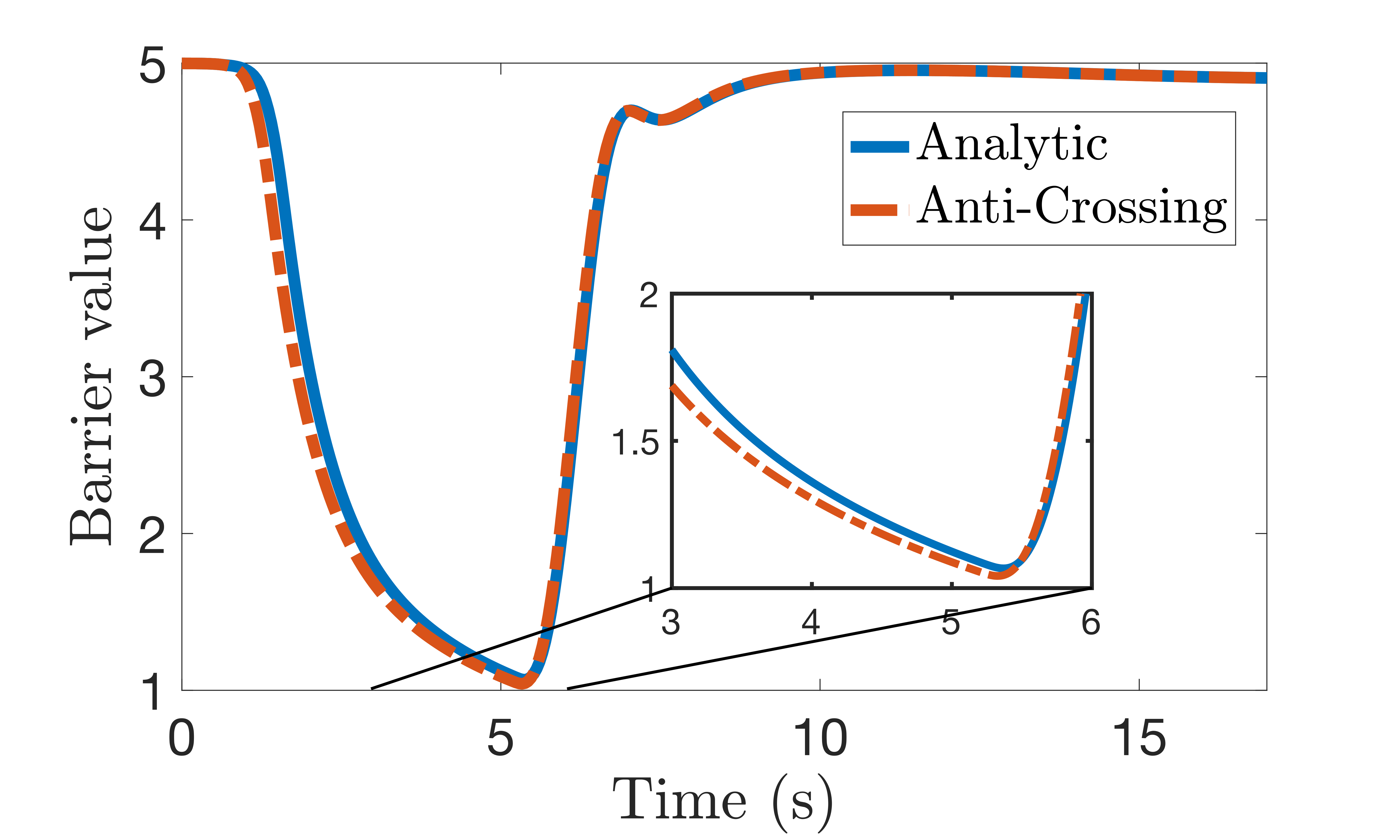}\label{fig:distance_barrier_avoid}}
  \caption{Trajectories and barrier values for range-only measurements. In the trajectory plots \cref{fig:distance_trajectories,fig:distance_trajectories_avoid} the safe region is the interior of the closed red curves and the desired state with the green $\times$. The analytic and anti-crossing methods prevent the system from leaving the safe set, but the analytic is more conservative as indicated by the larger value of the barrier function.}
  \label{fig:distance_results}
  \vskip -0.2in
\end{figure*}

\subsection{Bearing-Only Measurements}
Next, we demonstrate the method using bearing-only measurements for localization. 
The measurement model is
$\mathfrak{m}_k(p_x ,p_y )=\tan^{-1}\left((p_y-b^k_y)/(p_x-b^k_x)\right)$ for $k=1,2,3$ and the weighting matrix is $\Sigma(p_x,p_y)=I$ since the eigenvalues naturally decayed as the distance from the beacons increased thereby eliminating the need to model measurement degradation/dropout separately.

\textit{Localization.}
For safe localization the safe region, where $\lambda_s\geq0.01$ is shown in \cref{fig:angle_trajectory} by the red line.
Looking at \cref{fig:angle_barrier}, both the analytic and anti-crossing methods maintained $h$ above zero, but their values deviated substantially. 
As with the range-only measurements, the analytic method was more conservative, but the peak control effort was approximately five times less.
The average per-step computation time for the analytic and anti-crossing methods was 0.8 ms and 19.4 ms respectively, in the anti-crossing computing the Hessian and its derivatives took only 0.06 ms.
Both methods ensured the eigenvalue of the Hessian did not drop below the specified threshold, even with the complex shape of the safe set $S$.

\textit{Detection Avoidance.}
The bearing-only measurements can also be used in a detection avoidance formulation. For detection avoidance, the safe region is where $\lambda_s\leq0.01$ is shown in \cref{fig:angle_trajectory_avoid} by the red line.
Both the anti-crossing method and the analytic methods kept the system in the safe set as indicated by the value of the barrier function \cref{fig:angle_barrier_avoid} with the anti-crossing exhibiting more aggressive behavior.

\begin{figure*}[t!]
\centering
  \subfloat[Localization trajectory.]{\includegraphics[trim=0 40 0 50, clip, width=.43\textwidth]{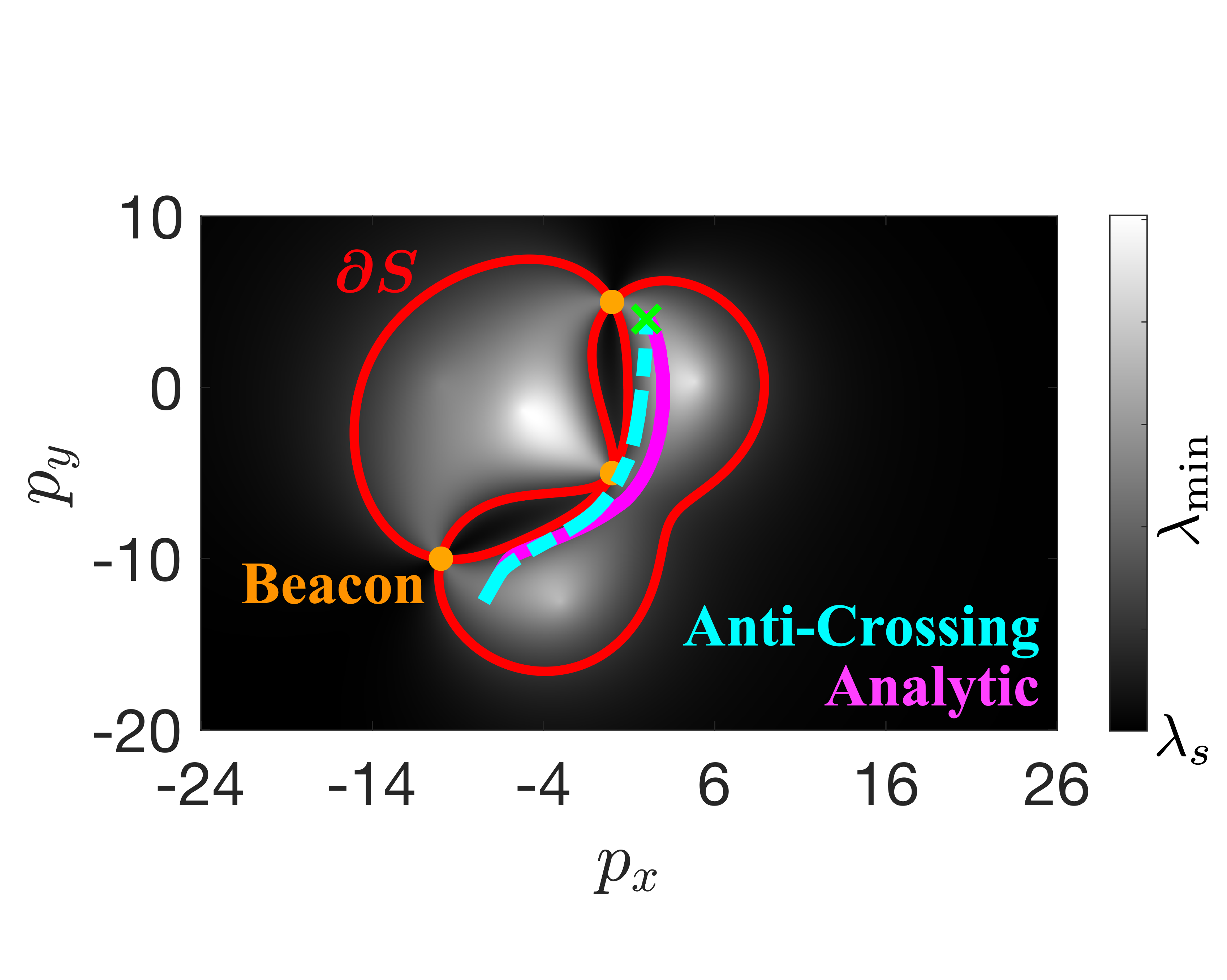}\label{fig:angle_trajectory}}
  \hspace{0.2em}
  \subfloat[Detection avoidance trajectory.]{\includegraphics[trim=0 40 0 50, clip, width=.43\textwidth]{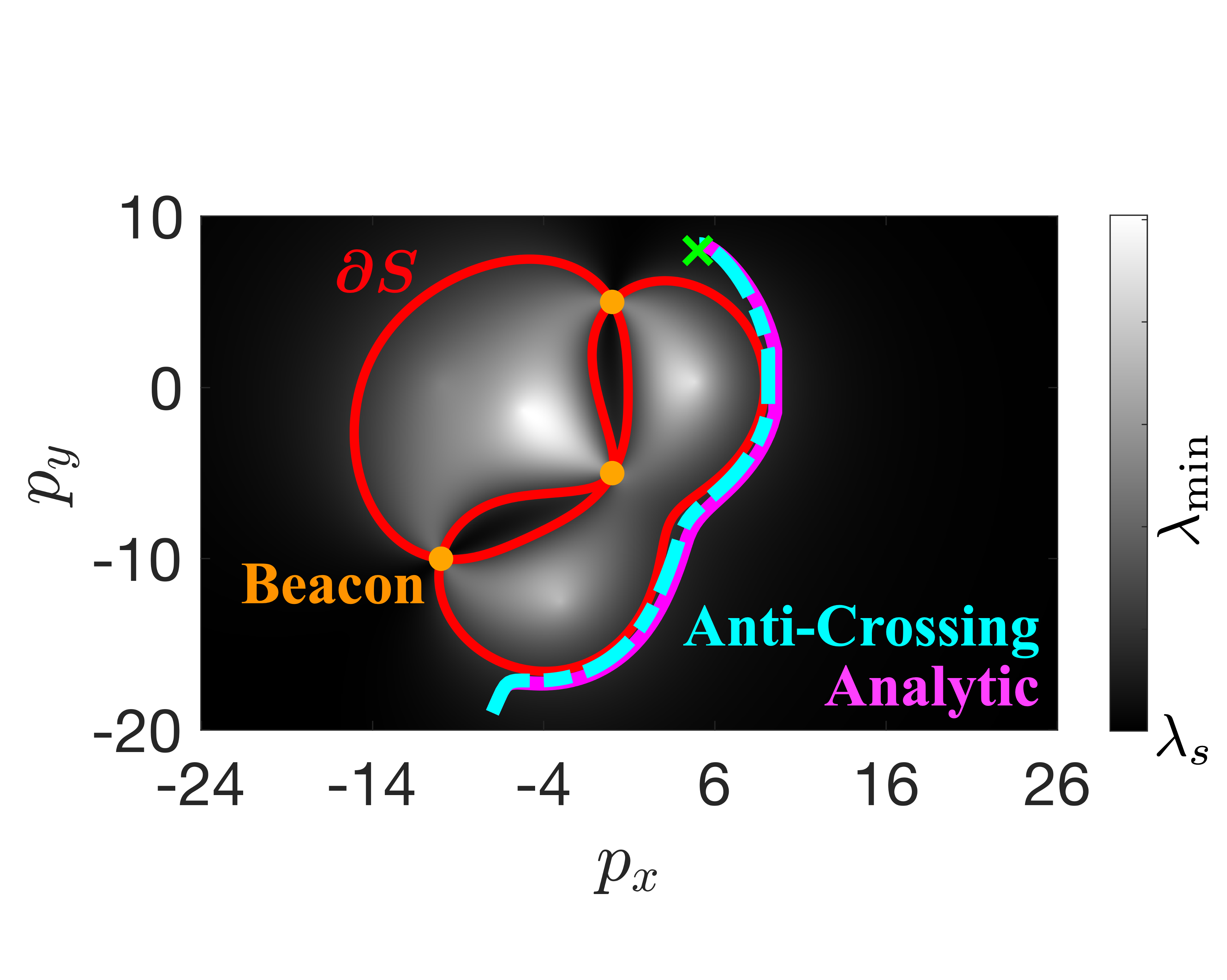}\label{fig:angle_trajectory_avoid}}
  \hspace{0.2em}
  \subfloat[Localization barrier value.]{\includegraphics[trim=0 5 0 0, clip, width=.43\textwidth]{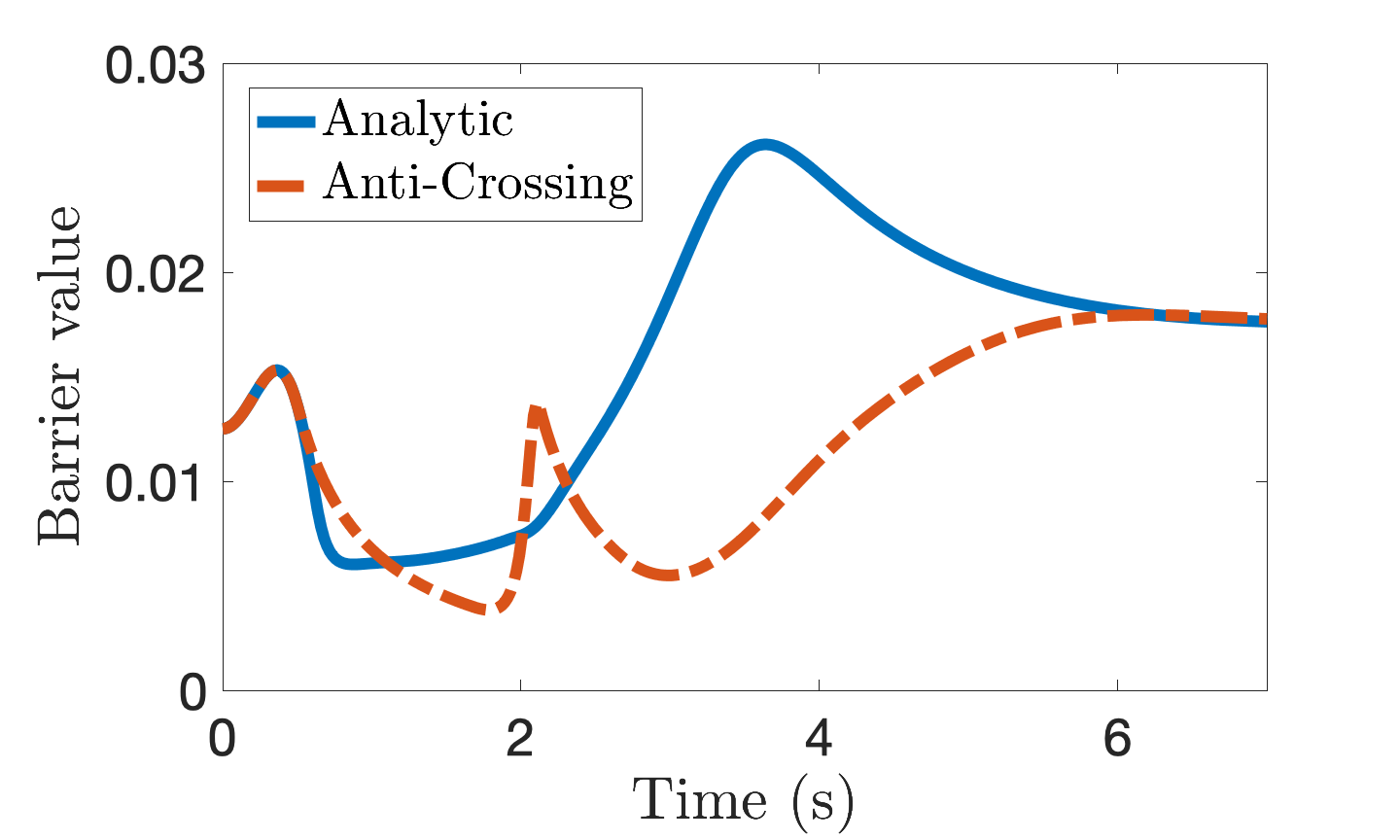}\label{fig:angle_barrier}}
  \hspace{0.2em}
  \subfloat[\centering Detection avoidance barrier value.]
  {\includegraphics[trim=0 5 0 0, clip, width=.43\textwidth]{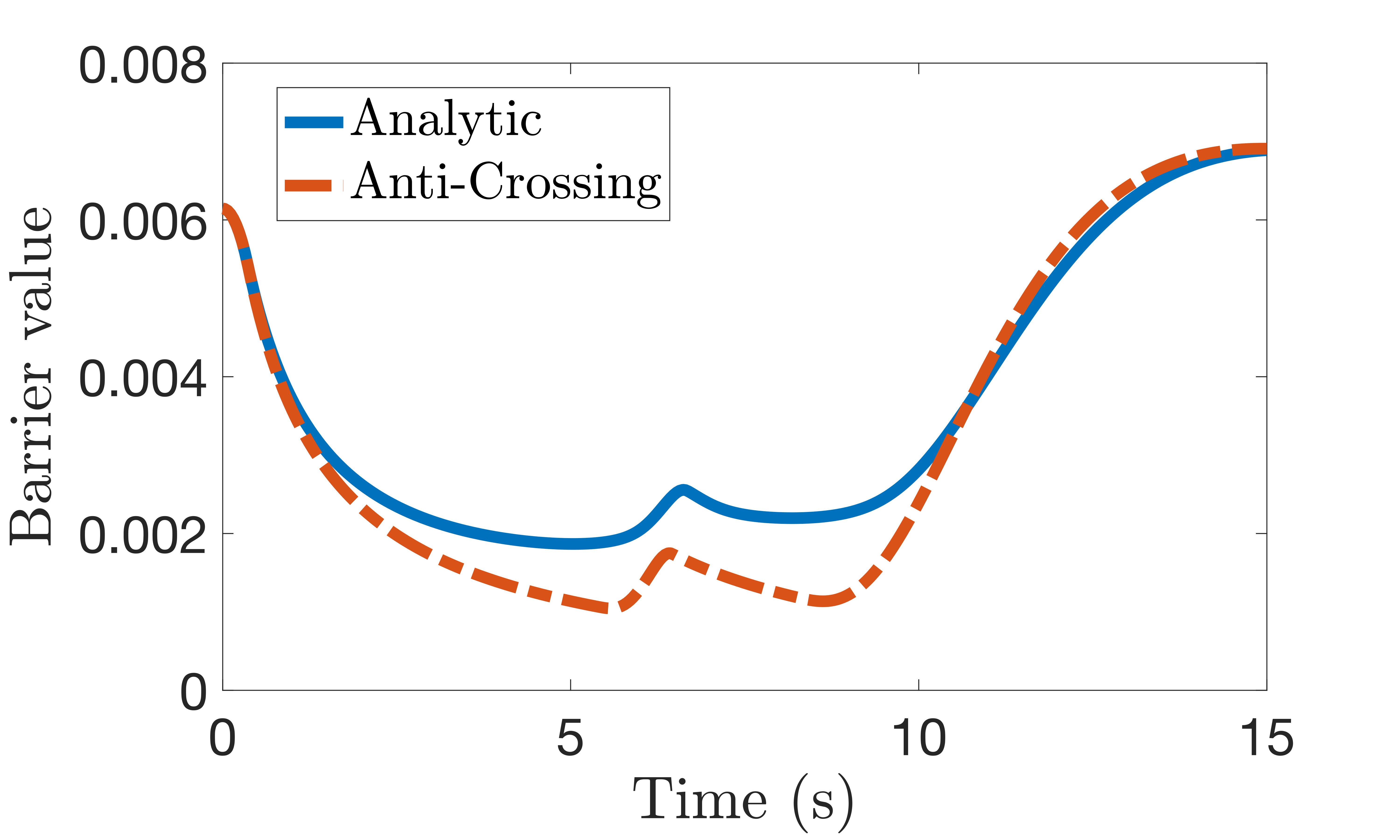}\label{fig:angle_barrier_avoid}}
  \caption{Trajectories and barrier values for bearing-only measurements. In the trajectory plots \cref{fig:angle_trajectory,fig:angle_trajectory_avoid} the safe region is the interior of the closed red curves and the desired state with the green $\times$. The analytic and anti-crossing method prevent the system from leaving the safe set, but the analytic is more conservative as indicated by the larger value of the barrier function.}
  \label{fig:angle_results}
  \vskip -0.2in
\end{figure*}
\section{Conclusion}
This letter presented a method for maintaining reliable localization in mobile systems by leveraging Information Control Barrier Functions (I-CBFs) to impose constraints on the eigenvalues of the Hessian in optimization-based localization algorithms.
Unlike previous approaches, which often require prior knowledge of the environment or are tailored to specific situations, this framework offers a general solution applicable to various systems. 
Future work will focus on applying this method to vision and LiDAR-based localization \cite{chen2023dliom} and exploring how some of the assumptions in this letter might be relaxed. Additionally, integrating robust, stochastic, and adaptive methods \cite{lopez2023unmatched} could broaden its applicability and {provide} safety guarantees for uncertain systems. 
The method was demonstrated through simulations on range-only and heading-only beacon localization, showing that both proposed methods preserve localization safety.
\label{sec:conclusion}

\section{Appendix}
Here we include derivations and formulas for computing the derivatives of an simple eigenvalue of a symmetric matrix for completeness.  
If $\lambda$ and $v$ are the eigenvalue and (normalized) eigenvector of a time-dependent matrix $A : \mathbb{R} \rightarrow \mathbb{R}^{n\times n}$ then by definition $A v = \lambda v$ which can be differentiated to get $\dot A v + A \dot v = \dot \lambda v + \lambda \dot v$.
Multiplying both sides by $v^\top$ one obtains $\dot \lambda = v^\top \dot A v$.
Using this relation, the first derivative of $v$ is $\dot v = (A-\lambda I)^{\dagger}(\dot \lambda I - \dot A) v,$
where $\dagger$ is the psuedoinverse. 
Differentiating $\dot{\lambda}$ and $\dot{v}$ again, $\ddot{\lambda} = \dot{v}^T \dot{A} v + v^T \ddot{A} v + v^T \dot{A} \dot{v}$ and $\ddot{v} = (A - \lambda I)^{\dagger} ( \ddot{\lambda} I - \ddot{A} ) v + (A - \lambda I)^{\dagger} ( \dot{\lambda} I - \dot{A} ) \dot{v}.$

Simulation parameters are listed in \cref{table:sim1,table:sim2,table:sim3,table:sim4}.
Unless otherwise noted the simulation time was $dt=1\times 10^{-4}$ s and all initial and final velocities were zero.

\begin{table}[h]
\caption{RANGE-ONLY LOCALIZATION PARAMETERS}
\label{table:sim1}
\vspace{-.8em}
\begin{center}
\begin{tabular}
{ c||c|c  }
 \hline
 & Analytic &Non-Coalescing\\
 \hline
 $\alpha(\cdot)$ & $10h_r^\circ(x)^2$     &$10h_r(x)^2$, \ $100 h_r^\times(x)^2$ \\
 $\delta$ (m)&   $0.01$  & $0.01$  \\
  $\delta^\times$ (m)&   N/A  & $0.01$  \\
 $c$&   $1$  & N/A  \\
 $\kappa$&   $1$  & N/A  \\
 $v_x(0)$ (ms$^{-1}$) & -1 & -1 \\
 \hline
\end{tabular}
\end{center}
\vspace{-0.25in}
\end{table}

\begin{table}[h]
\caption{RANGE-ONLY DETECTION AVOIDANCE PARAMETERS}
\label{table:sim2}
\vspace{-.8em}
\begin{center}
\begin{tabular}
{ c||c|c  }
\hline
 & Analytic &Non-Coalescing\\
 \hline
 $\alpha(\cdot)$ & $0.1 h_r^\circ(x)^2$     &$0.1 h_r(x)^2$, \ $500 h_r^\times(x)^2$ \\
 $\delta$ (m)&   $0.01$  & $0.01$  \\
   $\delta^\times$ (m)&   N/A  & $0.01$  \\
 $c$&   $10$  & N/A  \\
 $\kappa$&   $10$  & N/A  \\
 \hline
\end{tabular}
\end{center}
\vspace{-0.25in}
\end{table}

\begin{table}[h]
\caption{BEARING-ONLY LOCALIZATION PARAMETERS}
\label{table:sim3}
% \vskip{-.2} 
\vspace{-.8em}
\begin{center}
\begin{tabular}
{ c||c|c  }
 \hline
 & Analytic &Non-Coalescing\\
 \hline
 $\alpha(\cdot)$ & $500h_r^\circ(x)^2$     &$100h_r(x)^2$, \ $10 h_r^\times(x)^2$ \\
 $\delta$ (m)&   $1\times 10^{-6}$  & $1\times 10^{-6}$  \\
   $\delta^\times$ (m)&   N/A  & $0.01$  \\
  $c$&   $100$  & N/A  \\
   $\kappa$&   $1000$  & N/A  \\
 \hline
\end{tabular}
\end{center}
\vspace{-0.25in}
\end{table}

\begin{table}[h]
\caption{BEARING-ONLY DETECTION AVOIDANCE PARAMETERS}
\label{table:sim4}
\vspace{-.8em}
\begin{center}
\begin{tabular}
{ c||c|c  }
 \hline
 & Analytic &Non-Coalescing\\
 \hline
 $\alpha(\cdot)$ & $100h_r^\circ(x)^2$     &$100h_r(x)^2$, \ $500 h_r^\times(x)^2$ \\
 $\delta$ (m)&   $1\times 10^{-6}$  & $1\times 10^{-6}$  \\
   $\delta^\times$ (m)&   N/A  & $0.01$  \\
  $c$&   $5000$  & N/A  \\
   $\kappa$&   $5000$  & N/A  \\
 \hline
\end{tabular}
\end{center}
\vspace{-0.3in}
\end{table}

\label{sec:appendix}

\FloatBarrier
\bibliographystyle{IEEEtran}
\bibliography{references.bib}

\end{document}